\theoremstyle{plain}
\newtheorem{theorem}{Theorem}[section]
\theoremstyle{definition}
\theoremstyle{remark}
\icmltitlerunning{DawnPiper: A Memory-scablable Pipeline Parallel Training Framework}
\begin{document}

\twocolumn[
\icmltitle{DawnPiper: A Memory-scablable Pipeline Parallel Training Framework}



\icmlsetsymbol{equal}{*}

\begin{icmlauthorlist}
\icmlauthor{Xuan Peng}{hust}
\icmlauthor{Xuanhua Shi}{hust}
\icmlauthor{Haolin Zhang}{hust}
\icmlauthor{Yunfei Zhao}{hust}
\icmlauthor{Xuehai Qian}{thu}
\end{icmlauthorlist}

\icmlaffiliation{hust}{The National Engineering Research Center for Big Data Technology and System, Services Computing Technology and System Lab, Huazhong University of Science and Technology, Wuhan, 430074, China.}
\icmlaffiliation{thu}{Tsinghua University}

\icmlcorrespondingauthor{Xuanhua Shi}{xhshi@hust.edu.cn}

\icmlkeywords{GPU, Deep Learning Training, Pipeline Parallelism, Memory Management}

\vskip 0.3in
]

\printAffiliationsAndNotice{}  
%

\begin{abstract}
Pipeline parallelism is a crucial paradigm for large-scale model training.
However, imbalances in memory footprint across stages can lead to significant GPU memory wastage,
limiting the model sizes that pipeline parallelism can effectively support.
In this paper, we introduce DawnPiper, a memory-scalable pipeline parallel training framework. 
Firstly, we develop a DL compilation-based profiling method that transforms the model into a fine-grained computation graph.
This refinement gives us a finer granularity of model partitioning and memory optimization
while facilitating automatic code generation.
Based on observed memory usage characteristics,
we derive a performance-optimal theorem for pipeline parallel partitioning
that substantially reduces the partition search space.
Secondly, we propose a binary pipeline partitioning algorithm and
utilize a cost-model based memory optimization approach to
efficiently identify nearly optimal pipeline parallel strategy.
DawnPiper achieves up to a 4$\times$ and 11$\times$ increase in trainable maximum batch size compared to vPipe and PipeDream, respectively,
and provides up to a 1.5$\times$ performance speedup compared to vPipe.
\end{abstract}

\section{Introduction}
Deep neural networks (DNNs) have achieved remarkable success across various fields,
including computer vision, autonomous driving, and particularly in natural language processing (NLP),
as demonstrated by models like GPT-4~\cite{achiam2023gpt} and Llama3~\cite{dubey2024llama}.
The impressive performance of DNNs stems from advancements in their architecture
and the growing size of their parameters.
However, there is a significant disparity between the exponential growth in model parameters
and the slower increase in GPU memory capacity.
As a result, training such large models exceeds the memory limitations of a single GPU.
Consequently, training DNNs across multiple GPUs has become increasingly
prevalent in both academia~\cite{miao2023hetu,zhengAlpaAutomatingInter2022} and industry~\cite{abadiTensorFlowSystemLargeScale2016,paszkePytorchImperativeStyle2019}.



\emph{Pipeline parallelism} aims to parallelize computations across the layers of DNNs.
It splits DNNs into multiple layer blocks and divides the input data of a batch into smaller micro-batches,
enabling pipelined execution across different stages.
Communication occurs only between adjacent stages and can be overlapped with ongoing computations,
leading to lower communication volume compared to \emph{data parallelism} and
lower bandwidth requirements compared to \emph{tensor parallelism}.

Current pipeline parallelism frameworks can be classified into two categories:
\emph{Synchronous Pipeline Parallelism (SPP)} and \emph{Asynchronous Pipeline Parallelism (APP)}.
A notable example of \emph{SPP} is GPipe~\cite{huangGpipeEfficientTraining2019},
which is shown as Figure~\ref{subfig:gpipe}.
Each small rectangle represents the
forward and backward computations of a micro-batch, with
the number inside indicating the micro-batch index.
However, \emph{SPP} suffers from
low GPU utilization due to a synchronization barrier before
the next mini-batch can start, leading to pipeline bubbles (idle GPU time).
To tackle this problem, PipeDream~\cite{narayananPipeDreamGeneralizedPipeline2019}
introduces a \emph{1F1B} scheduling mechanism to enhance GPU utilization, which is shown as Figure~\ref{subfig:pipedream}.
PipeDream begins with a warm-up phase to initialize the pipeline,
requiring the first $(\ell-x)$ micro-batches in stage $x$ to be fed into the pipeline before the 1F1B scheduling can begin,
in which $\ell$ represents the number of pipeline stages.
To ensure parameter consistency during both the forward and backward passes within the same micro-batch,
multiple versions of parameters must be maintained.
Specifically, $(\ell-x)$ replicas of parameters are stored at stage $x$,
meaning earlier stages require additional GPU memory to hold these replicas.
\begin{figure}[htb]
  \centering
  \subfigure[Synchronous Pipeline Parallel — GPipe]{
    \centering
    \label{subfig:gpipe}
    \includegraphics[width=1.0\linewidth]{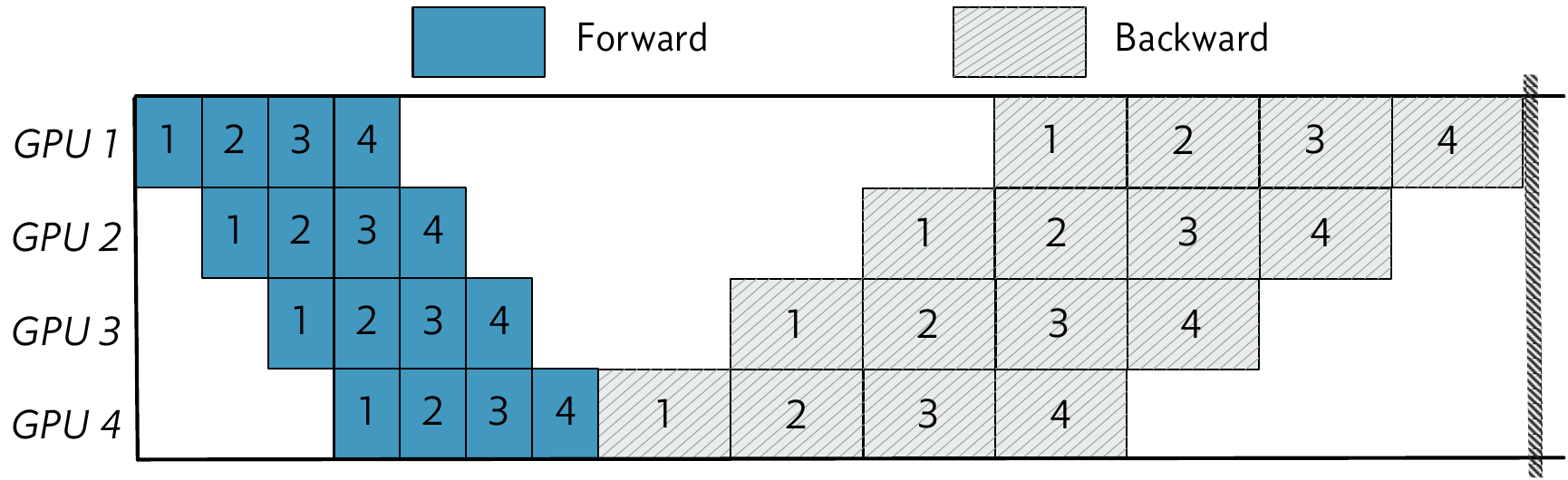}
  }
  \subfigure[Asynchronous Pipeline Parallel — PipeDream]{
    \centering
    \label{subfig:pipedream}
    \includegraphics[width=1.0\linewidth]{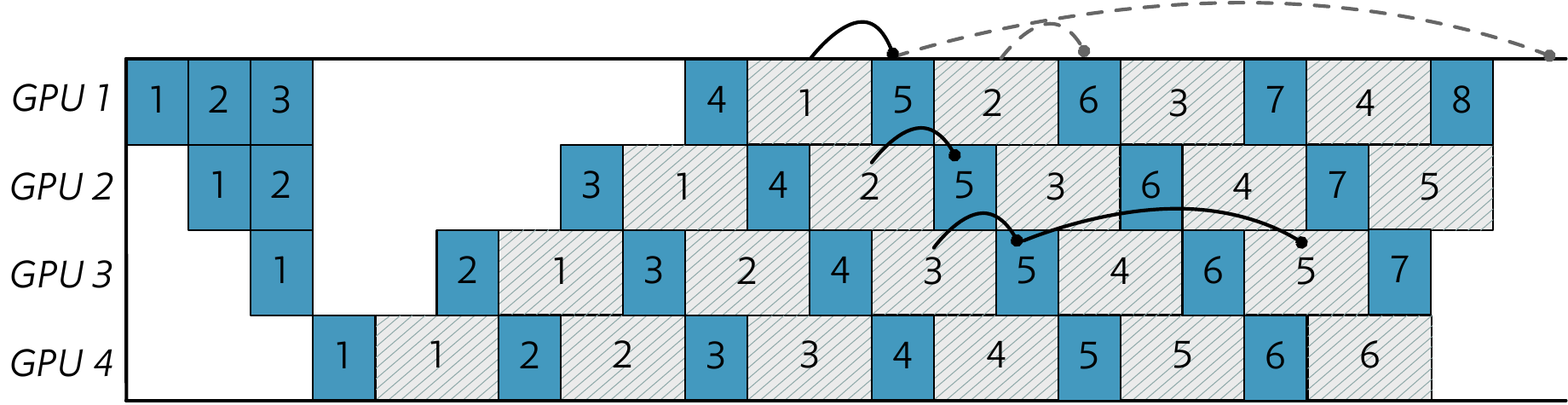}
  }
  \caption{Synchronous and Asynchronous Pipeline Parallel}
  \label{fig:pp-exec}
\end{figure}

Both approaches aim to balance the execution time across stages,
as any lagging (straggler) stage causes delays in the entire pipeline,
slowing down the overall training performance.
However, focusing solely on balancing computation time often overlooks memory usage at different stages.
We have observed over 40\% GPU memory waste in existing pipeline parallel training frameworks.
This waste stems from two factors: 1) the varying ratio of computation time to memory usage across different types of layers,
and 2) the inconsistency in the number of parameter versions, activation memory, and gradients required at each stage,
further exacerbating memory imbalance.
This imbalance significantly impacts the scalability of pipeline parallelism with respect to memory.

\emph{Memory swapping}~\cite{rhuVDNNVirtualizedDeep2016,jinLayerCentricMemoryReuse2018,wangSuperneuronsDynamicGPU2018,huangSwapAdvisorPushingDeep2020,renSentinelEfficientTensor2021}
and \emph{recomputation}~\cite{chenTrainingDeepNets2016,kirisameDynamicTensorRematerialization,heHOMEHolisticGPU2022,korthikantiReducingActivationRecomputation2023}
are two popular techniques for reducing memory usage during DNN training.
\emph{Swapping} leverages CPU memory as an external buffer to extend GPU memory capacity,
while \emph{recomputation} discards activation memory after forward propagation,
regenerating it later through redundant computation.
Both methods influence the memory footprint and execution time of a DNN,
creating a vast and complex search space when combined with pipeline parallel partitioning.
This combined optimization is an NP-hard problem due to its exponential search complexity.
All prior works address pipeline parallel partitioning and memory optimization at a coarse granularity,
typically at the layer or layer block level.
While this approach simplifies partitioning, it limits the optimization space and increases overhead.
In contrast, finer-grained, \emph{tensor-level} optimization methods have proven
more efficient in prior memory optimization studies~\cite{pengCapuchinTensorbasedGPU2020,zhangTENSILETensorGranularity2022,nieTSPLITFinegrainedGPU2022}.

In this paper, we propose DawnPiper, a memory-scalable pipeline parallelism partitioning method
designed to enhance both pipeline training performance and GPU memory utilization.
Firstly, we develop a DL compilation-based profiling technique
to generate a fine-grained computation graph for the model.
This expanded partition space allows for precise adjustments to the computation time and memory footprint of each stage.
Additionally, the module code for each stage can be automatically generated via DL compilation.
Profiling results from typical DNN training reveal a key memory usage characteristic:
the majority (over 90\%) of activation and operation memory consumption is relatively small (around 100 MB).
Building on this, we derive a performance-optimal theorem in pipeline parallel partitioning that defines
the partition range between compute-balanced and memory-balanced positions when dividing adjacent pipeline stages.

Secondly, we introduce a binary pipeline parallel partition algorithm based on the theorem.
Through starting from the middle stage,
we treat the left and right model segments as two adjacent stages
and recursively explore potential partition positions between the compute-balanced and memory-balanced positions.
For each partition plan, we apply a cost-model-based memory optimization method,
as proposed by Capuchin~\cite{pengCapuchinTensorbasedGPU2020}, to minimize memory usage within GPU capacity in linear time.
Then we can record the computation time of the longest execution stage.
By evaluating all partition plans, the strategy that results in the shortest execution time for the longest stage
is identified as the (nearly) optimal partition solution.

We have prototyped DawnPiper on PyTorch~\cite{paszkePytorchImperativeStyle2019}.
Our evaluations demonstrate that DawnPiper achieves up to a 4$\times$ and 11$\times$
increase in maximum trainable batch size compared to
vPipe~\cite{zhaoVPipeVirtualizedAcceleration2022} and PipeDream~\cite{narayananPipeDreamGeneralizedPipeline2019},
respectively, and offers up to a 1.5$\times$ performance speedup compared to vPipe.

\section{Related Work}
\label{sec:related}
\subsection{Data Parallelism}
Data parallelism is a widely used parallel strategy
that divides the training data into multiple shards
and distributes them across different devices.
As model parameters and optimizer states increasingly consume
a significant portion of memory during NLP model training,
Ren et al. introduced the Zero Redundancy Optimizer (ZeRO)~\cite{rajbhandariZeROMemoryOptimizations2020}
to evenly distribute this memory across devices.
ZeRO-Offload~\cite{renZerooffloadDemocratizingBillionScale2021} advances this approach
by offloading all parameters and optimizer computations to the CPU,
and redesigns the CPU optimizer computation method to accelerate the process.
ZeRO-Infinity~\cite{rajbhandariZeROinfinityBreakingGPU2021} builds on ZeRO-Offload
by further optimizing memory usage through offloading both the optimizer and GPU memory to the CPU,
non-volatile memory (NVM),
and other external storage.
PatrickStar~\cite{fangParallelTrainingPreTrained2022} proposed organizing memory in blocks
and using these blocks as communication units to enhance communication bandwidth utilization.

\subsection{Pipeline and Hybrid Parallelism}
GPipe~\cite{huangGpipeEfficientTraining2019} and PipeDream~\cite{narayananPipeDreamGeneralizedPipeline2019}
are pioneering works in synchronous and asynchronous pipeline parallelism, respectively.
DAPPLE~\cite{fanDAPPLEPipelinedData2021} introduces the 1F1B computation scheduling into synchronous pipeline parallelism
by dividing the batch into smaller micro-batches during the initial warm-up phase.
It reduces pipeline bubbles at the cost of requiring a sufficient number of micro-batches,
which is not always feasible.
Chimera~\cite{liChimeraEfficientlyTraining2021} proposes a bidirectional synchronous pipeline scheduling method,
reducing pipeline bubbles by enabling a GPU to handle both forward and backward computations of different stages.
Hanayo~\cite{liuHanayoHarnessingWavelike2023} combines DAPPLE and Chimera’s approaches,
introducing a wave-based computation scheduling method.
Narayanan et al.~\cite{narayananEfficientLargeScaleLanguage2021} mitigate pipeline bubbles by
allowing a stage’s GPU to store more model partitions, based on DAPPLE’s scheduling,
though this increases communication overhead.
Qi et al.~\cite{qiZeroBubbleAlmost} achieves nearly bubble-free
pipeline parallelism by re-scheduling backward computation.
BPipe~\cite{kimBPipeMemoryBalancedPipeline2023} balances GPU memory usage
between stages by asynchronously exchanging activations between GPUs.
vPipe~\cite{zhaoVPipeVirtualizedAcceleration2022} uses an iterative algorithm to
dynamically balance model partitioning, memory swapping, and recomputation, employing the Kernighan-Lin algorithm~\cite{kernighanEfficientHeuristicProcedure1970}.
AdaPipe~\cite{sunAdaPipeOptimizingPipeline2024} introduces a two-step
dynamic programming approach for pipeline partitioning and recomputation.
Pipeline parallelism is often used in conjunction with tensor parallelism,
as seen in hybrid parallelism frameworks like Flexflow~\cite{jiaDataModelParallelism2019} and Alpa~\cite{zhengAlpaAutomatingInter2022}.
We plan to explore the extension of DawnPiper to hybrid parallelism in future work.



\section{Challeges and Motivations}
\label{sec:cam}
\subsection{Memory Imbalance in Pipeline Parallelism}
\label{subsec:mipp}
Pipeline parallelism is designed to distribute large model training across multiple devices
when a single device's memory cannot accommodate it.
However, significant memory inefficiency arises due to imbalanced memory usage across pipeline stages,
which contradicts its intended purpose.
To highlight this imbalance,
we conducted a benchmark test on the peak memory usage of each stage in GPipe and PipeDream.
The test was performed on a server with 8 A100 GPUs (40 GB each),
with pipeline stages set to 4 and 8 for GPipe and PipeDream, respectively.
\begin{figure}[t]
  \centering
  \begin{minipage}[t]{0.49\linewidth}
    \centering
    \includegraphics[width=0.95\linewidth]{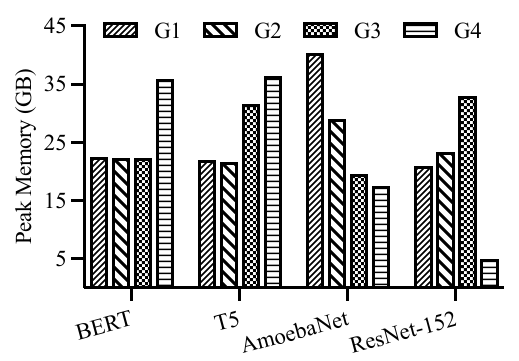}
    \caption{Peak Memory Usage on GPipe (4 GPUs)}
    \label{fig:gpipe-4gpus}
  \end{minipage}
  \begin{minipage}[t]{0.49\linewidth}
    \centering
    \includegraphics[width=0.95\linewidth]{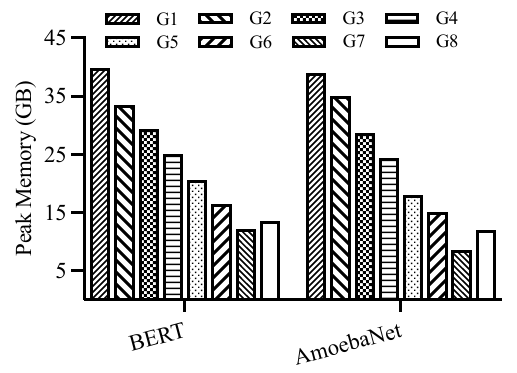}
    \caption{Peak Memory Usage on PipeDream (8 GPUs)}
    \label{fig:pd-8gpus}
  \end{minipage}
\end{figure}

Figure~\ref{fig:gpipe-4gpus} and \ref{fig:pd-8gpus} presents the benchmark results
for GPipe and PipeDream, respectively,
where G1 to G8 represent the GPU of different pipeline stages.
In GPipe, memory imbalance primarily arises from the uneven ratio
of computation time to memory usage among different layers in the model.
If the maximum peak memory usage across all stages is used as the baseline for GPU capacity,
this imbalance results in nearly 40\% and 34\% memory waste in ResNet-152 and AmoebaNet, respectively.
In PipeDream, this imbalance is further aggravated
by the varying number of copies of model parameters, activation memory, and gradients needed at each stage.
For example, the memory waste ratio for BERT is around 28\% in GPipe, but in PipeDream, this waste ratio exceeds 40\%.

It is evident that focusing on balancing computation across pipeline stages
often leads to imbalanced memory usage, and vice versa.
Additionally, with memory optimization techniques like swapping and recomputation,
a key challenge arises when GPU memory is oversubscribed after partitioning a model for computational balance:
should the model be repartitioned to achieve more balanced memory usage,
or should memory optimization methods be applied to handle stages that exceed GPU capacity?
It’s difficult to determine which approach offers better performance.
See Section~\ref{sec:pcbmbcb} of the appendix for detailed discussion.
\begin{figure}[htb]
  \centering
  \subfigure[Activation memory]{
    \centering
    \includegraphics[width=0.47\linewidth]{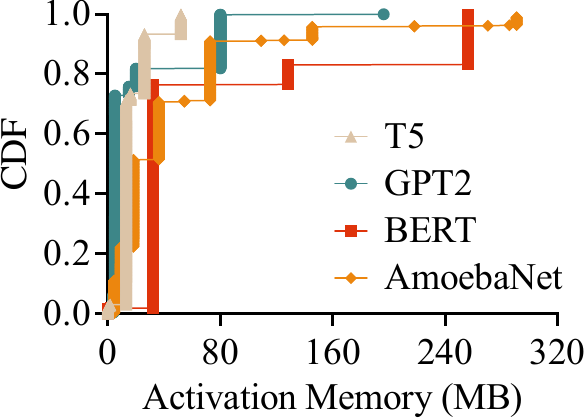}
    \label{fig:act-mem-cdf}
  }
  \subfigure[Consumed Memory]{
    \centering
    \includegraphics[width=0.47\linewidth]{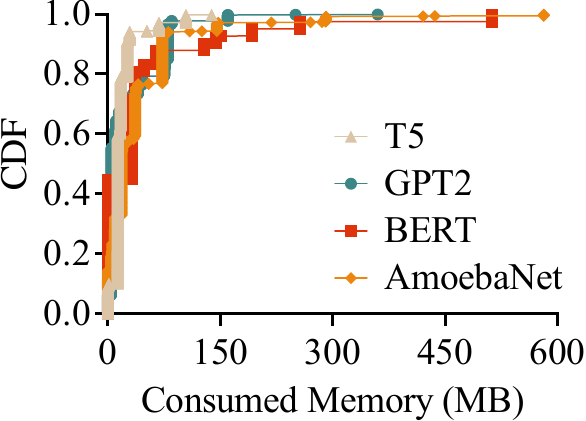}
    \label{fig:cons-mem-cdf}
  }
  \caption{CDF of Node's Activation and Consumed Memory}
  \label{fig:mem-cdf}
\end{figure}

\subsection{Memory Usage Characteristics in DL Training}
\label{sec:mot}

Based on the fine-grained computation graph obtained through DL compilation,
we profiled the activation and consumed memory sizes for BERT, T5, GPT-2, and AmoebaNet,
with results shown in Figure~\ref{fig:mem-cdf}.
It reveals that nearly 90\% of computational nodes have activation memory sizes below 80 MB.
In T5, most activation memory is only around 50 MB.
In BERT, nodes with activation memory no greater than 128 MB account for over 80\% of the total nodes.
Memory consumption is calculated as the sum of memory allocated and released during node execution,
with released memory represented as a negative value.
It can be seen that nearly 90\% of memory consumption per node is under 150 MB.
This suggests that partition position between stages can be flexibly adjusted between nodes with minimal memory usage,
allowing for small changes in the overall memory footprint of each stage while keeping communication overhead low.
Based on these insights, we propose a partitioning theorem for pipeline parallelism,
which will be discussed in Section~\ref{subsec:ptpt}.



\section{Design of DawnPiper}
\label{sec:design}
The overall architecture of DawnPiper is illustrated in Figure~\ref{fig:sys_arch},
comprising three key modules responsible for pipeline parallelism partitioning.
When a model is submitted to DawnPiper, the \emph{compiler} first generates a fine-grained
computation graph through DL compilation.
Next, execution metadata for each node is gathered via profiling.
Finally, the partitioning and memory optimization policy are determined based on the profiling results.
\begin{figure*}[t]
  \centering
  \includegraphics[width=0.95\linewidth]{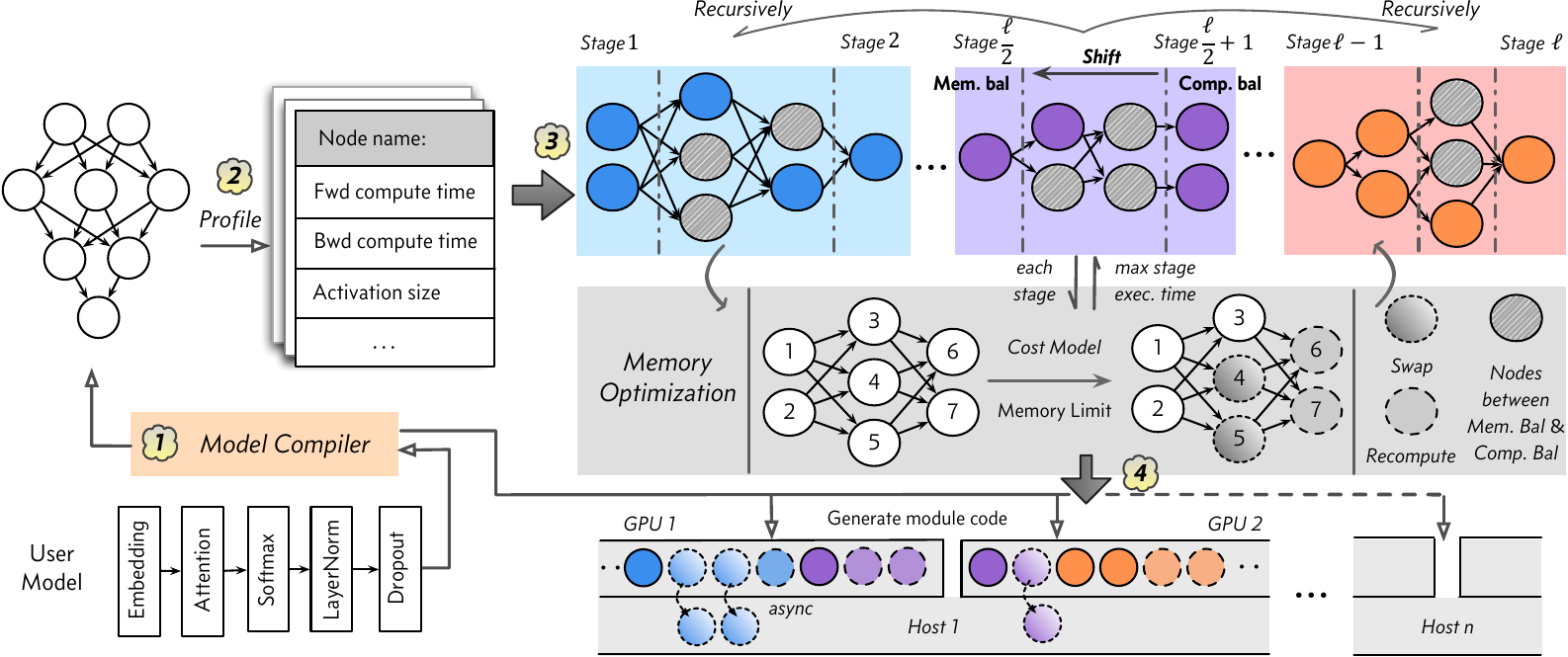}
  \caption{DawnPiper System Architecture}
  \label{fig:sys_arch}
\end{figure*}

\subsection{Pipeline Terminology and Partition Theorem}
\label{subsec:ptpt}
The number of pipeline stages is denoted $\ell$.
For a pipeline stage $x$, where $x \in [1, \ell]$,
the set of computation nodes it contains is represented as $\mathcal{N}_x$.
The computation time for this stage is determined by summing the
forward and backward computation times of all nodes in $\mathcal{N}_x$, as shown in Equation~(\ref{equ:tx}).
\begin{align}
  T_x=\sum_{n \in \mathcal{N}_x}\left(t_f^n+t_b^n\right)
\label{equ:tx}
\end{align}

The peak memory usage for a stage is calculated based on the nodes within the stage and the pipeline scheduling method.
The goal of optimal partitioning is to balance the computation times across all stages,
ensuring they are as close as possible.
Since inter-stage communication is minimal and can be overlapped with computation,
the objective is to minimize the execution time of the longest stage, i.e.,
the stage with the maximum computation time across all partitioning strategies.
This optimization is subject to the constraint that the peak GPU memory usage of
all stages remains within the GPU’s capacity.
In other words, the goal is to $minimize\ \max_{1 \leq x \leq \ell} T_x$.

Given the fine-grained computation graph generated through DL compilation,
the search space for pipeline partitioning grows exponentially
when considering all possible partition positions combined with different memory optimization strategies.
This makes the search space extremely large.
To address this, it is crucial to eliminate impractical partition strategies and reduce the overall search space.
We propose a partition theorem that leverages the memory usage characteristics
to effectively limit the partition space.

\begin{theorem}
\label{thm:ppp}
For a model that needs to be divided into two pipeline stages,
the optimal partition point will lie within the closed interval between the
compute-balanced and memory-balanced positions, provided the following three conditions are met:
1) The computation time and memory usage during forward propagation exhibit
a monotonically increasing trend as operations are progressively scheduled;
2) The inter-stage communication time is consistently less than the computation time of each stage;
3) Memory optimization opportunities are evenly distributed across the model.
\end{theorem}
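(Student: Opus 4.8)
The plan is to parametrize a two-stage split by a scalar partition position $p$, where increasing $p$ moves more operations (taken in forward-scheduling order) into the first stage. By condition~1 the per-stage compute times $C_1(p)$ and memory footprints $M_1(p)$ are then nondecreasing in $p$, while $C_2(p)$ and $M_2(p)$ are nonincreasing. This monotonicity makes the two balance points well defined: let $p_c$ solve $C_1(p_c)=C_2(p_c)$ (compute-balanced) and $p_m$ solve $M_1(p_m)=M_2(p_m)$ (memory-balanced). The objective from Section~\ref{subsec:ptpt} is to minimize the makespan of the longer stage once memory optimization has been applied to fit each stage into GPU capacity.

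First I would fold the memory-optimization cost into an \emph{effective} stage time. Whenever a stage's footprint exceeds capacity, swapping or recomputation is applied to bring it within capacity at the price of extra time; I model this as
\[
  \tilde C_x(p) = C_x(p) + \Delta_x(p), \qquad
  \Delta_x(p) = g\!\bigl(\max\{0,\; M_x(p)-\mathrm{Cap}\}\bigr),
\]
where $g$ is nondecreasing with $g(0)=0$. Condition~3 is exactly what licenses this form: because memory-optimization opportunities are spread evenly over the graph, the incremental cost of reclaiming more memory grows monotonically with the excess, so $\Delta_x$ is a monotone function of $M_x$. Consequently $\tilde C_1$ inherits monotone growth and $\tilde C_2$ monotone decay in $p$. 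Condition~2 guarantees that inter-stage communication, being smaller than each stage's compute and overlappable with it, cannot change which stage is the straggler; hence the true objective is governed by $\max\{\tilde C_1(p),\tilde C_2(p)\}$, which---as the pointwise maximum of an increasing and a decreasing function---is V-shaped and is minimized precisely where $\tilde C_1=\tilde C_2$.

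The crux is a sign analysis of $D(p)=\tilde C_1(p)-\tilde C_2(p)$ at the two endpoints; assume without loss of generality $p_m\le p_c$ (the other ordering is symmetric). At $p=p_c$ the raw compute terms cancel, leaving $D(p_c)=\Delta_1(p_c)-\Delta_2(p_c)$; since $p_c\ge p_m$ forces $M_1(p_c)\ge M_2(p_c)$ and $g$ is nondecreasing, this is $\ge 0$. At $p=p_m$ the overhead terms cancel (equal footprints give equal $\Delta$), leaving $D(p_m)=C_1(p_m)-C_2(p_m)$; since $p_m\le p_c$ forces $C_1(p_m)\le C_2(p_m)$, this is $\le 0$. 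Because $D$ is monotonically increasing in $p$, the intermediate-value property places its sign change---the V-shaped minimizer, i.e.\ the optimal partition---inside $[p_m,p_c]=[\min\{p_c,p_m\},\max\{p_c,p_m\}]$, which is the claim.

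The main obstacle I anticipate is making the overhead model $\Delta_x$ rigorous rather than heuristic: I must argue from condition~3 that a below-capacity stage genuinely incurs zero added time and that the cost is a single monotone function of the footprint excess (independent of which nodes are optimized), since otherwise $D$ need not be monotone and the endpoint signs could both be ambiguous. A secondary subtlety is that partition positions are discrete, so an exact crossing $\tilde C_1=\tilde C_2$ may not be attainable; here I would invoke the fine-grained computation graph and the measured smallness of per-node activation and consumed memory (Figure~\ref{fig:mem-cdf}) to argue that the sign change of $D$ still occurs between two adjacent admissible positions inside $[p_m,p_c]$, so the discrete optimum remains in the closed interval.
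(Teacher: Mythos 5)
Your proposal is correct and follows essentially the same route as the paper's proof: both rely on the monotonicity from condition~1 to make the first stage's (effective) time increasing and the second's decreasing in the partition position, use condition~3 to ensure the memory-optimization overhead cannot invert which stage is the straggler, and conclude that moving outside $[\rho_{mb},\rho_{cb}]$ only lengthens the longest stage. Your version is simply a more formal rendering (explicit overhead function $g$ and a sign-change argument on $\tilde C_1-\tilde C_2$) of the paper's verbal endpoint analysis, and your closing caveats about rigorizing $\Delta_x$ and discreteness identify exactly the points the paper also leaves informal.
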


\begin{proof}
The compute-balanced and memory-balanced
partition positions are denoted as $\rho_{cb}$ and $\rho_{mb}$, respectively.
In the 1F1B computation scheduling of asynchronous pipelines parallelism,
the earlier stages generally need to store more activation and parameter memory,
resulting in $\rho_{cb}$ typically being positioned further to the right than $\rho_{mb}$
(the reverse case can be proven similarly).
When the partition point is shifted to the right from $\rho_{cb}$,
more computation nodes are allocated to the first stage.
Due to the monotonic increase in both computation time and memory usage,
and given that inter-stage communication overhead is negligible,
the computation time and peak memory usage of the first stage increase,
while those of the second stage decrease.
Since memory optimization opportunities are evenly distributed across the model,
no stage with lower memory usage will require a disproportionately
high memory optimization cost to stay within the GPU memory limit.
Consequently, regardless of whether memory optimization is applied,
the first stage becomes the stage with the longest computation time,
surpassing the longest stage under the compute-balanced partition,
leading to a decline in pipeline training performance as the partition shifts rightward from $\rho_{cb}$.
Similarly, if the partition point is moved leftward from $\rho_{mb}$,
the second stage will become the one with the longest computation time
which will continue to increase as the partition point shifts further left.
\end{proof}

\subsection{Binary Pipeline Parallel Partition}
For a given pipeline partition,
the goal of memory optimization for each stage is to minimize the optimization cost
while ensuring it stays within the GPU memory capacity.
Let $T_x^{moo}$ represent the minimum memory optimization cost time for stage $x$.
The objective then becomes $minimize\ \max_{1 \leq x \leq \ell} (T_x + T_x^{moo})$.

To efficiently handle partitioning for deeper pipelines, we employ a binary partitioning approach.
This method first selects the middle stage of the pipeline,
treating the left and right segments as separate stages,
and applies compute-balanced and memory-balanced partitioning.
See Section~\ref{subsec:mcbpa} of the appendix for details.
This approach allows the shortest computation time for the longest execution stage
on either side to be determined incrementally, level by level.
Once all possible partition positions have been explored at the outermost level,
the optimal partition strategy is identified.
This approach reduces the partitioning algorithm's complexity to $\mathcal{O}(\varphi^{log\ell})$,
where $\varphi$ represents the number of potential partition positions traversed between $\rho_{cb}$ and $\rho_{mb}$.
This number is significantly smaller than the total number of nodes in the model.
For instance, BERT has over 1,000 nodes,
but fewer than 100 nodes fall within this range when communication optimization is considered simultaneously.
Details of communication optimization are provided in Section B.2 of the appendix.
The complete process for this partitioning method is outlined in Algorithm\ref{alg:binary-partition}.

The \texttt{AdjacentPartition} function in Algorithm~\ref{alg:binary-partition}
outlines the process for identifying the partition and memory optimization strategy
that minimizes the difference in computation time between two adjacent stages.
The \texttt{BiPar} function serves as the core of the binary partitioning algorithm.
Lines 17-19 call the \texttt{AdjacentPartition} function when only two adjacent stages remain.
Otherwise, the function calculates the compute-balanced and memory-balanced points when partitioning from the midpoint.
Lines 24-25 capture the optimal results for the left and right partitions,
which are recursively refined.
By invoking \texttt{BiPar}$(\mathcal{N}_\mathcal{G}, 1, \ell)$,
the final optimal pipeline partitioning and memory optimization strategy is determined.
Note that while the algorithm assumes the number of pipeline stages to be a power of 2,
it is adaptable to any number of stages.
When the remaining stages are reduced to three,
the algorithm performs a traversal from the compute-balanced and memory-balanced partition positions of the first stage.
\begin{algorithm}
  \caption{Binary pipeline parallel partition}
  \label{alg:binary-partition}
\begin{algorithmic}[1]
\FUNCTION{AdjacentPartition($\mathcal{N},sid$)}
  \STATE $\rho_{cb},\rho_{mb} \leftarrow CompMemBalPartition(\mathcal{N}, sid, \ell)$
  \IF{$\max_{sid \le x \le sid+1} ((\ell-x+1) \times M_x^{\rho_{cb}}) < M$}
      \STATE \textbf{return} $\max_{sid \le x \le sid+1} T_x^{\rho_{cb}}$
  \ENDIF
  \STATE $sorted\_pps \leftarrow IdentifyAndSort(\rho_{cb},\rho_{mb})$
  \FOR{$\rho \leftarrow sorted\_pps$}
      \STATE $mopt, min\_t \leftarrow MemOpt(\mathcal{N}_{sid},\mathcal{N}_{sid+1},M)$
      \STATE $par\_dict[\rho] \leftarrow (min\_t,mopt)$
      \IF{$!mopt$}
      \STATE \textbf{break}\;
      \ENDIF
  \ENDFOR
  \STATE \textbf{return} $\min(par\_dict)$
\ENDFUNCTION

\FUNCTION{BiPar($\mathcal{N},\mathcal{L},\mathcal{R}$)}
  \IF{$\mathcal{R}-\mathcal{L}==1$}
    \STATE \textbf{return} $AdjacentPartition(\mathcal{N},\mathcal{L})$
  \ENDIF
  \STATE $mid \leftarrow (\mathcal{L}+\mathcal{R}) \div 2$
  \STATE $\rho_{cb},\rho_{mb} \leftarrow CompMemBalPartition(\mathcal{N}, mid, \ell)$
  \STATE $sorted\_pps \leftarrow IdentifyAndSort(\rho_{cb},\rho_{mb})$
  \FOR{$\rho \leftarrow sorted\_pps$}
    \STATE $mopt_l,min\_t_l \leftarrow BiPar(\mathcal{N}_{\mathcal{L}},\mathcal{L},mid)$
    \STATE $mopt_r,min\_t_r \leftarrow BiPar(\mathcal{N}_{\mathcal{R}},mid+1,\mathcal{R})$
    \STATE $mopt \leftarrow mopt_l \cup mopt_r$
    \STATE $par\_dict[\rho] \leftarrow (\max(min\_t_l,min\_t_r),mopt)$
    \IF{$!mopt$}
      \STATE \textbf{break}\;
    \ENDIF
  \ENDFOR
  \STATE \textbf{return} $\min(par\_dict)$
\ENDFUNCTION
\end{algorithmic}
\end{algorithm}


\subsection{Memory Optimization}
The goal of memory optimization is to minimize the execution time of each stage
while adhering to the GPU memory limits for a given partitioning.
This objective is very similar to memory optimization in single-GPU training.
To achieve this, we employ the cost model from Capuchin~\cite{pengCapuchinTensorbasedGPU2020},
which evaluates memory swapping and recomputation using \emph{FreeTime} and \emph{Memory Saving Per Second (MSPS)}.
The model first applies memory swapping until data transfer can no longer overlap with computation.
Then, it selects the method either memory swapping or recomputation that incurs the smallest performance overhead.
In DawnPiper, memory optimization candidates are easily identified through each node's saved tensors list,
and their \emph{FreeTime} and \emph{MSPS} can be calculated from the node’s metadata.

\section{Evaluation}
\label{sec:evaluation}
\subsection{Experimental setup}
\textbf{Server and Workloads.}
The experiments were conducted on a server equipped with
8 NVIDIA Tesla A100 GPUs (40 GB each) connected via PCIe 4.0 ×16.
We evaluate our approach using four state-of-the-art deep learning models:
BERT~\cite{devlinBERTPretrainingDeep2019}, GPT-2~\cite{radfordLanguageModelsAre2019},
T5 (Text-to-Text Transfer Transformer)~\cite{raffelExploringLimitsTransfer2020}, and AmoebaNet~\cite{realRegularizedEvolutionImage2019}.
The four models consist of 340 million, 770 million, 780 million, and 28 million parameters, respectively.
We evaluate two pipeline configurations with 4 stages and 8 stages.

\textbf{Baselines.}
We have set up four baselines to evaluate the memory efficiency and training performance,
including ZeRO~\cite{rajbhandariZeROMemoryOptimizations2020},
torchgpipe~\cite{kimTorchgpipeOntheflyPipeline2020} (a PyTorch reimplementation of GPipe),
PipeDream~\cite{narayananPipeDreamGeneralizedPipeline2019},
and vPipe~\cite{zhaoVPipeVirtualizedAcceleration2022}.

\subsection{Memory Efficiency}
In this section, we assess the maximum trainable batch size
for each method to evaluate GPU memory efficiency.

\subsubsection{Synchronous Pipeline Parallel Training}
The number of micro-batches is equal to the number of pipeline stages in \emph{SPP}.
The results are shown in Table~\ref{tab:maxbs-sync},
where the synchronous training modes of vPipe and DawnPiper are denoted as vPipe-S and DPiper-S, respectively.

The results indicate that ZeRO-2 and ZeRO-3 achieve nearly the same maximum batch sizes across the four models.
For GPipe and vPipe, the maximum batch sizes are generally comparable to ZeRO for BERT, GPT-2, and T5.
However, they are significantly smaller than ZeRO for the AmoebaNet model.
This difference arises because in transformer-based models,
the computation and memory usage of each layer are roughly proportional.
As a result, GPipe and vPipe, which aim to balance computation time across stages,
can also effectively balance memory usage.
In contrast, CNN models like AmoebaNet have layers where computation time and memory usage are often not proportional.
For instance, a convolutional layer may have a long computation time but minimal memory usage,
leading to an uneven memory footprint when balancing computation time across stages.
Consequently, the maximum batch sizes that GPipe and vPipe achieve on AmoebaNet
are only 68\% and 46\% of those achieved by ZeRO when the number of pipeline stages is 4.
These ratios drop further to 58\% and 32\% as the number of pipeline stages increases to 8.
Meanwhile, GPipe supports larger batch sizes than vPipe,
but the result reverses when memory optimization is enabled.
This reversal occurs because the stage with the memory bottleneck shifts when memory optimization is applied.
\begin{table}[htbp]
\caption{Maximum Batch Size in Synchornous Pipeline Parallel}
\label{tab:maxbs-sync}
\centering
  \resizebox{0.5\textwidth}{!}{
      \begin{tabular}{m{2em}|m{2em}|ccccc}
      \toprule
      \#\ GPU & MO & Method & BERT & GPT-2 & T5 & AmoebaNet \\
      \midrule
      \multirow{8}{*}{4} & \multirow{5}{*}{None} & ZeRO-2 & 64    & 8     & 140   & 412 \\
            &       & ZeRO-3 & 64    & 8     & 132   & 404 \\
            &       & GPipe & 52    & 8     & 144   & 280 \\
            &       & vPipe-S & 60    & 8     & 144   & 188 \\
            &       & \textbf{DPiper-S} & \textbf{72} & \textbf{8} & \textbf{160} & \textbf{360} \\
  \cmidrule{2-7}          & R & GPipe & 120   & 28    & 348   & 644 \\
  \cmidrule{2-2}          & \multirow{2}{*}{S+R} & vPipe-S & 128   & 28    & 348   & 720 \\
            &       & \textbf{DPiper-S} & \textbf{160} & \textbf{32} & \textbf{424} & \textbf{1220} \\
      \midrule
      \multirow{8}{*}{8} & \multirow{5}{*}{None} & ZeRO-2 & 128   & 16    & 224   & 824 \\
            &       & ZeRO-3 & 128   & 16    & 208   & 808 \\
            &       & GPipe & 88    & 16    & 256   & 480 \\
            &       & vPipe-S & 88    & 16    & 256   & 264 \\
            &       & \textbf{DPiper-S} & \textbf{136} & \textbf{16} & \textbf{320} & \textbf{560} \\
  \cmidrule{2-7}          & R & GPipe & 208   & 48    & 664   & 1384 \\
  \cmidrule{2-2}          & \multirow{2}{*}{S+R} & vPipe-S & 208   & 56    & 664   & 1424 \\
            &       & \textbf{DPiper-S} & \textbf{280} & \textbf{60} & \textbf{800} & \textbf{2180} \\
    \bottomrule
    \end{tabular}}
\end{table}

DawnPiper achieves the largest batch sizes among all methods for the BERT, GPT-2, and T5 models.
Without memory optimization, it increases the batch size by up to 1.2$\times$ and 1.55$\times$
compared to the second-best method, vPipe, when the pipeline stages are 4 and 8, respectively.
Note that for GPT-2,
the maximum batch sizes remain consistent across all methods
due to the model's substantial memory footprint during training,
where even a slight increase in batch size results in a significant rise in memory usage.
Although DawnPiper refines the pipeline partitioning,
it cannot further increase the batch size under these conditions.
For AmoebaNet, DawnPiper's maximum batch size is slightly smaller than ZeRO’s,
as data parallelism can more effectively divide the memory footprint across GPUs,
which is challenging for pipeline parallelism.
Nonetheless, DawnPiper outperforms GPipe and vPipe with batch size increases of 1.29$\times$ and 1.91$\times$
when the pipeline stages are set to 4.
As the number of stages increases to 8, it still surpasses GPipe and vPipe by up to 1.17$\times$ and 2.12$\times$, respectively.

With memory optimization enabled, DawnPiper continues to achieve the largest batch sizes.
Compared to the second-best method, vPipe, DawnPiper increases the maximum batch size by
an average of 1.33$\times$ and 1.29$\times$ for pipeline stages of 4 and 8, respectively.
Additionally, the results show that the average batch size achieved by DawnPiper
increases by 1.84$\times$ as the number of pipeline stages doubles, demonstrating its strong memory scalability.

\begin{table}[htbp]
\caption{Maximum Batch Size in Asynchronous Pipeline Parallel}
\label{tab:maxbs-async}%
\centering
  \resizebox{0.5\textwidth}{!}{
    \begin{tabular}{m{2em}|m{2em}|m{5em}cccc}
    \toprule
    \#\ GPU & MO & Method & BERT & GPT-2 & T5 & AmoebaNet \\
    \midrule
    \multirow{5}{*}{4} & \multirow{3}{*}{None} & PipeDream & 16    & 2     & 40    & 70 \\
          &       & vPipe-AS & 16    & 2     & 40    & 48 \\
          &       & \textbf{DPiper-AS} & \textbf{32} & \textbf{3} & \textbf{78} & \textbf{150} \\
\cmidrule{2-7}          & \multirow{2}{*}{S+R} & vPipe-AS & 46    & 8     & 110   & 300 \\
          &       & \textbf{DPiper-AS} & \textbf{84} & \textbf{10} & \textbf{192} & \textbf{480} \\
    \midrule
    \multirow{5}{*}{8} & \multirow{3}{*}{None} & PipeDream & 16    & 2     & 44    & 64 \\
          &       & vPipe-AS & 16    & 2     & 44    & 32 \\
          &       & \textbf{DPiper-AS} & \textbf{40} & \textbf{5} & \textbf{84} & \textbf{128} \\
\cmidrule{2-7}          & \multirow{2}{*}{S+R} & vPipe-AS & 58    & 16    & 144   & 214 \\
          &       & \textbf{DPiper-AS} & \textbf{100} & \textbf{22} & \textbf{248} & \textbf{528} \\
    \bottomrule
    \end{tabular}}
\end{table}

\subsubsection{ASynchronous Pipeline Parallel Training}
In this section, we evaluate memory efficiency in the asynchronous pipeline parallel training mode.
The results are shown in Table ~\ref{tab:maxbs-async},
where the asynchronous modes of vPipe and DawnPiper are denoted as vPipe-AS and DPiper-AS, respectively.

With memory optimization disabled,
DawnPiper significantly increases the maximum batch size compared to PipeDream and vPipe,
as GPU memory usage is more unbalanced in asynchronous pipeline parallel training.
Across the four models,
DawnPiper increases the maximum batch size by an average of 2.14$\times$ and 2.73$\times$ compared to vPipe
when the pipeline stages are 4 and 8, respectively.
Compared to PipeDream, the improvement ranges from 4.8$\times$ to 11$\times$.
This improvement is due to DawnPiper’s ability to refine
pipeline partitioning and fully utilize available GPU memory resources.
For AmoebaNet specifically, the increases are 3.13$\times$ and 4$\times$, respectively,
as vPipe is primarily designed for Transformer-based models and does not support CNN models effectively.
With memory optimization enabled, DawnPiper still achieves a 1.61$\times$ and 1.82$\times$
increase in maximum batch size compared to vPipe when the pipeline stages are 4 and 8, respectively.

\subsection{Training Performance}
In this section, we evaluate the training speed of each method under the same batch size.
Note that ZeRO and PipeDream can only operate with small micro-batch sizes before encountering memory oversubscription,
so they are excluded from this comparison.
It is worth noting that the time required to determine the pipeline partition and memory optimization strategy
is less than 1 second for all four models in our experiment.
\begin{figure}[tb]
  \centering
  \subfigure[BERT]{
    \centering
    \label{subfig:bert-4}
    \includegraphics[width=0.47\linewidth]{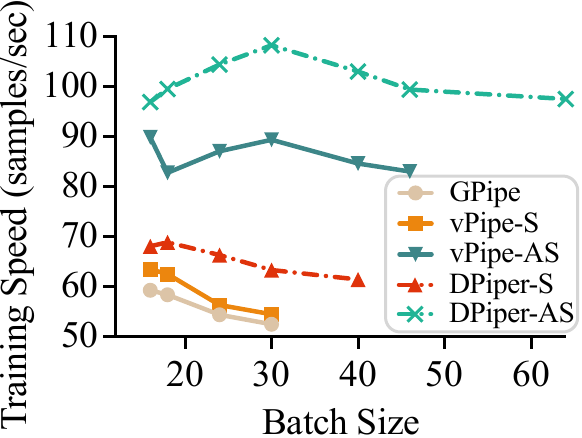}
  }
  \subfigure[GPT-2]{
    \centering
    \label{subfig:gpt-2-4}
    \includegraphics[width=0.47\linewidth]{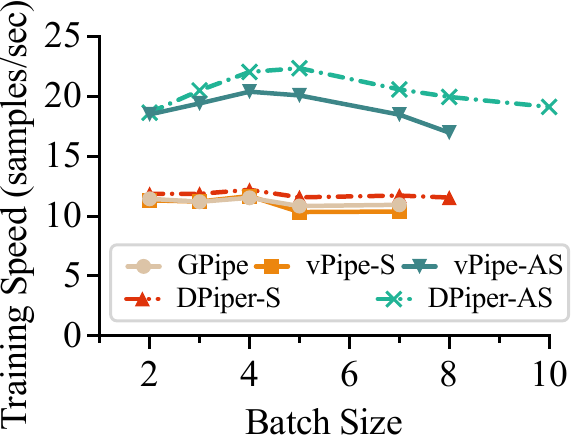}
  }
  \subfigure[T5]{
    \centering
    \label{subfig:t5-4}
    \includegraphics[width=0.47\linewidth]{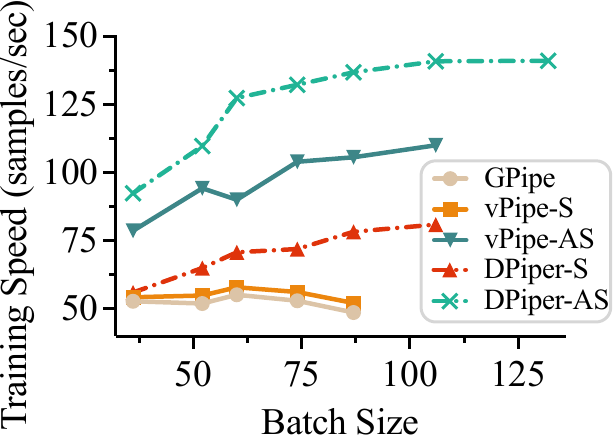}
  }
  \subfigure[AmoebaNet]{
    \centering
    \label{subfig:amoebanet-4}
    \includegraphics[width=0.47\linewidth]{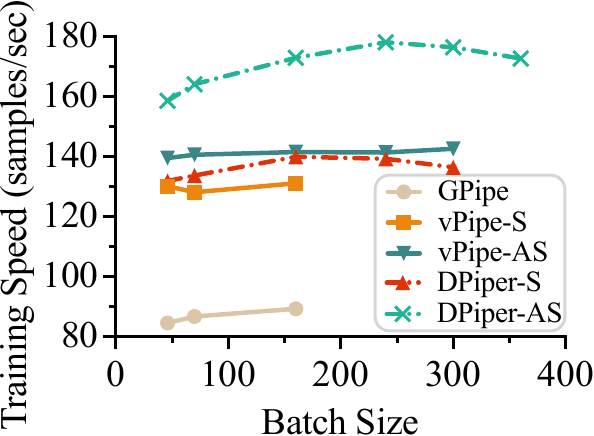}
  }
  \caption{Training Speed under Pipeline Stage of 4}
  \label{fig:perf-4}
\end{figure}

The training speed comparison for a pipeline stage of 4 is shown in Figure~\ref{fig:perf-4}.
The performance of synchronous pipeline parallelism is significantly
lower than that of asynchronous pipeline parallelism due to pipeline bubbles caused by computation synchronization.
Overall, vPipe-S outperforms GPipe, with a more noticeable advantage on AmoebaNet.
This is because GPipe does not account for communication overhead between stages,
leading to inefficient partitions in CNN models and higher communication volumes.
In contrast, both DawnPiper and vPipe take communication overhead into account when determining pipeline partitioning.

At smaller batch sizes, DawnPiper does not exhibit a significant advantage over other methods,
with an average performance improvement of around 10\%.
As the batch size increases, other methods must implement memory optimization to meet the training requirements.
At this point, DawnPiper shows a clear increase in training speed compared to the others.
This improvement stems from two factors. Firstly, GPU resources are better utilized as the batch size grows.
Secondly, DawnPiper can: (a) adjust the memory footprint of each stage with minimal impact on computation time,
and (b) reduce the memory footprint with negligible performance overhead using memory swapping for relatively small batch sizes.
The first advantage is possible due to insights on memory usage during training, as discussed in Section~\ref{sec:mot}.
Consequently, in synchronous pipeline parallelism,
DawnPiper achieves up to a 1.5$\times$ performance improvement on T5 as the batch size increases
and up to a 1.41$\times$ improvement in asynchronous mode.
Overall, as the batch size grows,
DawnPiper delivers an average performance improvement of 1.2$\times$, 1.12$\times$, 1.32$\times$, and 1.24$\times$ over vPipe
in asynchronous parallel training on BERT, GPT-2, T5, and AmoebaNet, respectively.
\begin{figure}
  \centering
  \subfigure[GPT-2]{
    \centering
    \label{subfig:gpt-2-8}
    \includegraphics[width=0.47\linewidth]{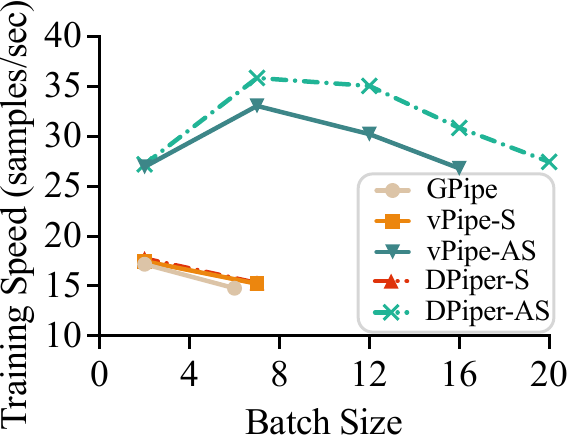}
  }
  \subfigure[T5]{
    \centering
    \label{subfig:t5-8}
    \includegraphics[width=0.47\linewidth]{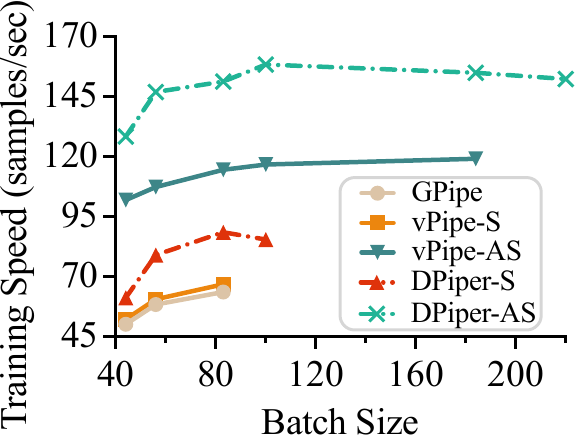}
  }
  \caption{Training Speed under Pipeline Stage of 8}
  \label{fig:perf-8}
\end{figure}

Evaluations were conducted only on GPT-2 and T5 when the number of pipeline stages increased to 8,
as the model sizes of BERT and AmoebaNet are too small to fully utilize 8 GPUs.
The results are presented in Figure~\ref{fig:perf-8}.
In synchronous mode, the performance of GPT-2 on DawnPiper, GPipe, and vPipe is comparable.
This similarity arises because the optimal partition position for computation and memory in GPT-2 are closely aligned,
limiting the space for partition optimization.
In asynchronous mode, as the batch size gradually increases,
DawnPiper's training speed is on average 1.15$\times$ faster than vPipe on GPT-2 and 1.34$\times$ faster on T5.
DawnPiper shows greater performance improvement on T5 due to its mixed architecture,
which combines both encoder and decoder,
offering a more complex structure that provides greater partitioning opportunities.
Furthermore, as the number of pipeline stages increases to 8, DawnPiper demonstrates strong scalability.

\subsection{Detailed analysis on computation time and memory footprint}
To provide a more detailed comparison of the memory footprint and computation performance between DawnPiper and vPipe,
we profiled the GPU peak memory usage and computation time of each stage on T5 in asynchronous mode with 8 pipeline stages.
The micro-batch size was set to 110, which is 2.5$\times$ larger
than the maximum batch size vPipe can handle without memory optimization.
The results are presented in Figure~\ref{fig:mem-comp}.

Regarding peak memory usage per stage, DawnPiper uses more GPU memory than vPipe,
but with more balanced memory distribution across stages.
This is because vPipe applies a coarser-grained pipeline partitioning and memory optimization,
whereas DawnPiper implements finer-grained memory optimizations tailored to each stage’s requirements.
As a result, overall GPU memory utilization is improved by 17.8\% compared to vPipe.
Additionally, DawnPiper’s partition strategy leads to more balanced computation times across stages,
with the difference between the longest and shortest stage execution times being only 7.7\%,
compared to 35.8\% under vPipe’s partition and memory optimization strategy.
\begin{figure}
  \centering
  \includegraphics*[width=0.8\linewidth]{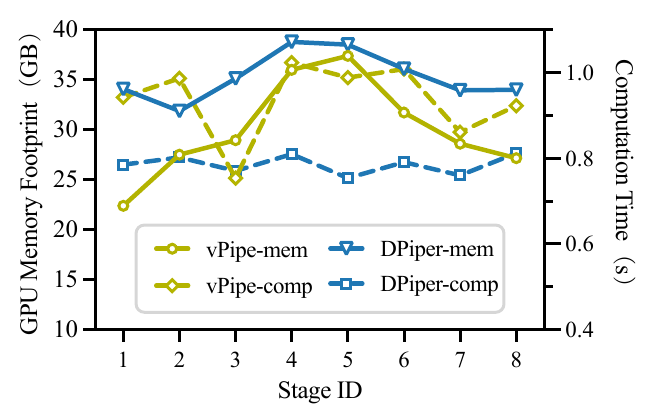}
  \caption{Peak Memory Usage and Computation time Analysis}
  \label{fig:mem-comp}
\end{figure}

\section{Conclusions}
\label{sec:conclusion}
This paper introduces DawnPiper,
a memory-scalable pipeline parallel training framework
designed to achieve efficient pipeline parallelism while minimizing GPU memory resource waste.
The framework begins by compiling and profiling the model to
obtain a fine-grained computation graph and detailed memory usage characteristics,
which refines the model partitioning and memory optimization space.
Based on observed memory usage features during training,
DawnPiper employs a binary pipeline parallel partitioning method with partition range limitations,
combined with a cost-model based memory optimization method.
Experimental results on four typical DNNs demonstrate that
DawnPiper increases the trainable batch size by an average of 1.29$\times$ and 1.71$\times$ compared to GPipe and vPipe, respectively,
and up to 4.8$\times$–11$\times$ compared to PipeDream.
Additionally, it accelerates the parallel training speed of vPipe by up to 1.5$\times$.

\section*{Impact Statement}
This paper presents work whose goal is to advance the fields of pipeline parallelism in deep learning.
There are many potential societal consequences of our work, none which we feel must be specifically highlighted here.


\bibliography{./bib/papers}
\bibliographystyle{icml2025}



\newpage
\appendix
\onecolumn
\section{Performance Comparison Between Memory-Balanced and Compute-Balanced Partitioning}
\label{sec:pcbmbcb}
In Section~\ref{subsec:mipp},
we have discussed a key challenge that arises when GPU memory is oversubscribed after partitioning a model for computational balance:
should the model be repartitioned to achieve more balanced memory usage,
or should memory optimization methods be applied to handle stages that exceed GPU capacity?
Determining which approach yields better training performance is not straightforward.
To investigate this, we implemented two simple partitioning strategies in PipeDream for comparison.
The first strategy is memory-balanced partitioning (Mem-ba),
while the second combines compute-balanced partitioning with recomputation (Comp-ba+RP).
Both strategies achieve comparable maximum batch sizes, with negligible communication overhead.
\begin{figure*}[h]
    \centering
    \includegraphics[width=0.35\linewidth]{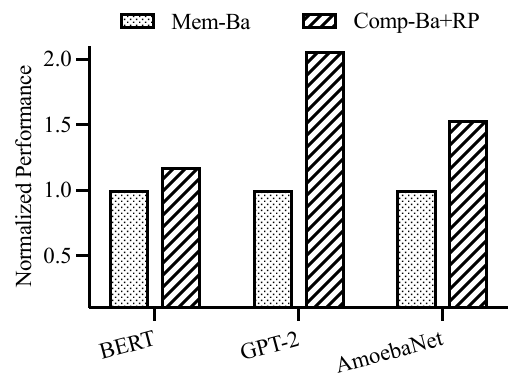}
    \caption{Normalized Performance of Memory-balanced and Compute-balanced Partitioning with Recomputation}
    \label{fig:norm-perf}
\end{figure*}

The normalized training performance results are shown in Figure~\ref{fig:norm-perf}.
Across all three models, the second method consistently outperforms the first,
particularly with the GPT-2 model,
where the training performance of Comp-Ba+RP is more than twice that of memory-balanced partitioning.
This performance gap is due to the significant difference in computation time between
the longest and shortest stages after memory-balanced partitioning in GPT-2.
Specifically, the computation time of the fourth stage reaches 286.275 ms
while the first stage only takes 29.104 ms — a nearly 10$\times$ difference.
As a result, the overall pipeline performance is severely constrained by the long computation time of the fourth stage.
Additionally, recomputation in GPT-2 provides significant memory savings with minimal additional computation overhead.
In the Comp-Ba+RP method, the total computation time for the fourth stage is reduced to 145.89 ms,
while the first stage, which requires more recomputation, increases to 136.89 ms.
This results in the fourth stage’s computation time being nearly halved compared to Mem-ba.
Similar improvements are observed in BERT and AmoebaNet, with performance gains of 1.18$\times$ and 1.54$\times$, respectively.
It is important to note that while the second method consistently outperforms in this benchmark,
this may not always hold true in every scenario.
The purpose of this micro-benchmark is to demonstrate that optimizing
pipeline parallel performance while considering memory limitations,
partitioning strategies, and memory optimization techniques is a complex and challenging problem.

\section{Supplementary Material for Section~\ref{sec:design}}
\label{sec:sm4design}
In this section, we provide additional design details of DawnPiper for Section~\ref{sec:design}.
\subsection{Memory and Compute Balanced Partitioning Algorithm}
\label{subsec:mcbpa}
In this section, we describe the process for determining the compute-balanced and memory-balanced partition position.
The general approach involves traversing the computation graph in execution order,
while continuously accumulating the forward/backward computation time or memory consumption.
Since asynchronous pipeline parallelism, such as in PipeDream,
requires accounting for stage-specific memory footprints
(due to differing numbers of activation and parameter copies across stages),
while computation time can be simply summed,
we focus on presenting the memory-balance partition algorithm for PipeDream, shown in Algorithm~\ref{alg:mem-ba}.
The other compute-balanced and memory-balanced partition can be derived by adjusting Algorithm~\ref{alg:mem-ba}.

Before executing the algorithm, we first traverse the computation graph to determine the depth of each node.
Nodes with the same depth are then sorted based on the start time of forward computation,
forming the node set $\mathcal{N}_\mathcal{G}$, which serves as an input to the algorithm.
Additional inputs include the GPU peak memory limit $M_\mathcal{G}$ and the number of pipeline stages $\ell$.
The algorithm outputs the partition positions $\rho_x$ for each stage.
In PipeDream, the ratio of activation and parameter copies that need to be stored across stages is
$\ell:(\ell-1):(\ell-2):\cdots:1$.
If the peak memory usage of a micro-batch for each stage is denoted as $M_x$,
the following equation must hold to ensure balanced peak memory usage across all stages.
\begin{align}
  \ell \times M_1=\cdots=(\ell-x+1) \times M_x=\cdots=M_{\ell}
\end{align}
Based on this equation, the $M_x$ for each stage is computed,
as outlined in lines 2-5 and line 12 of Algorithm~\ref{alg:mem-ba}.
The algorithm then iteratively traverses the nodes,
first adding the activation memory, parameter memory,
and optimizer states memory of each node,
while checking if the accumulated memory exceeds the previously recorded peak memory.
Next, it subtracts the memory that needs to be released at the current node,
repeating this process until the accumulated peak memory reaches the target $M_x$ for the current stage.
At this point, the current node marks the partition position for that stage.
Although the description in the algorithm simplifies the partition position as a single node,
it typically involves a set of nodes in practice.
\begin{algorithm}
    \caption{Memory balance partition in PipeDream}
    \centering
    \label{alg:mem-ba}
\begin{algorithmic}[1]
    \INPUT nodes collection $N_\mathcal{G}$, peak memory usage $M_\mathcal{G}$, pipeline depth $\ell$
    \OUTPUT $\bigcup_{1 \leqslant x \leqslant \ell}\rho_x$
    \STATE $x \leftarrow 1,cur\_mem \leftarrow 0,peak\_mem \leftarrow 0$
    \FOR{$i$ in $[0,\ell)$}
    \STATE $sum += \ell \div (\ell-i)$
    \ENDFOR
    \STATE $M_x \leftarrow M_\mathcal{G} \div sum$
    \FOR{$n$ in $N_\mathcal{G}$}
        \STATE $cur\_mem += m_a^n + m_p^n$
        \STATE $peak\_mem \leftarrow \max(cur\_mem,peak\_mem)$
        \STATE $cur\_mem -= m_d^n$
        \IF{$peak\_mem \ge M_x$}
        \STATE $\rho_x \leftarrow n,x \leftarrow x + 1$
        \STATE $M_x \leftarrow (\ell \div (\ell -x + 1) \times M_1)$
        \STATE $cur\_mem, peak\_mem \leftarrow 0$
        \IF{$x == \ell$}
            \STATE \textbf{break}
        \ENDIF
        \ENDIF
    \ENDFOR
\end{algorithmic}
\end{algorithm}

\subsection{Communication Optimization}
\label{subsec:commopt}
A key assumption in Theorem~\ref{thm:ppp} is
that communication overhead should not impact the efficiency of pipeline parallelism.
Therefore, it is essential to avoid partition points that generate significant
communication volumes during the traversal within the partition range.
As discussed in Section~\ref{sec:mot}, the majority of computation nodes have very small activation memory.
First, we can identify inevitable communication points,
i.e., nodes with edges extending far outside
the partition range where data transmission is unavoidable regardless of the partition strategy,
such as the output of the initial embedding layer in BERT.
This inevitable communication can be disregarded.
For the remaining nodes, communication overhead can be minimized by optimizing the partition positions.
For instance, if activations from multiple nodes at the same depth need to be transmitted to the next stage,
the algorithm can identify the lowest common leaf node connecting these edges.
If this leaf node is the sole connection point,
the partition can be adjusted so that only the activation of this single node is transmitted, instead of multiple nodes.
Such leaf nodes are typically easy to locate,
resulting in minimal impact on both the computation time and memory usage of the adjacent stages.

\section{Implementation Details}
\label{sec:impl}
In this section, we provide additional detail on the implementation of the proposed DawnPiper,
which may not be thoroughly specified in the main paper due to the page limit.
This includes the process for compiling and profiling a model in PyTorch,
as well as the implementation of memory swapping and recomputation.
Although these implementations are tailored to PyTorch,
the methods proposed are broadly applicable to other deep learning frameworks,
such as TensorFlow~\cite{abadiTensorFlowSystemLargeScale2016}.

\subsection{Compilation and Profiling}
\textbf{Compilation:} The purpose of model compilation
is to generate a fine-grained computation graph,
which expands the model partitioning and memory optimization space
while enabling automatic code generation for each pipeline stage.
Unlike tensor parallelism, this process does not involve splitting and tracking operations within individual operators.
To achieve this, we leverage the \emph{torch.fx} library~\cite{reedTorchFxPractical2022},
a pure Python toolkit provided by PyTorch for capturing and transforming neural network structures via tracing.
\emph{torch.fx} traces the model defined by user scripts without
dissecting internal computations within \texttt{nn.Module} operators.
Users can modify the captured computation graph freely,
and \emph{torch.fx} can then generate the corresponding Python code based on the updated graph.

\textbf{Profiling:} To profile the forward and backward computation times of nodes after compilation,
we insert timing hook functions at the start and end of the respective computations.
The initial training iterations during the warmup phase are excluded,
and the average is calculated over 50 batches of stable iterations.
For capturing the memory sizes of activations, parameters, and optimizer states,
we pass proxy input tensors of equivalent size through the computation graph and execute it once.
The corresponding memory sizes are extracted from the result tensors once each node’s computation completes.
Additionally, the saved tensor information for each node is retrieved during this execution
using the \texttt{torch.autograd.graph.saved\_tensors\_hooks} function provided by PyTorch, which we register for each node.

\subsection{Swap and Recomputation}
\textbf{Memory swap:} PyTorch does not natively support memory swapping operations.
Although a tensor can be transferred to a target device using the \texttt{Tensor.to(device)} interface,
its underlying GPU memory cannot be released without deleting the tensor object after the transfer completes.
To address this, we modified the underlying source code of PyTorch’s tensor data structure,
adding two functions: one that directly releases the underlying GPU memory without deleting the tensor object,
and another that sets the underlying memory address to a specified parameter.
This modification allows the original memory to be released after swapping out,
while setting the original tensor’s memory address to that of the swapped-in tensor.
For triggering memory swapping operations,
we use PyTorch's hook registration function to insert the operations at the appropriate points before training begins.
Once set, these memory swaps are consistently executed according to the defined strategy during pipeline parallel training.

\textbf{Recomputation:} PyTorch offers the \texttt{torch.utils.checkpoint} library for
handling recomputation by rewriting the parts of the module code that require it.
This library initiates recomputation when the corresponding backward computation starts,
functioning as on-demand recomputation.
Since we do not need another GPU stream for early recomputation process,
the library’s functionality is more than adequate for our needs.
Additionally, the module code for recomputation within each stage is
automatically generated during the module code generation phase.

\end{document}